\documentclass[conference,letterpaper]{IEEEtran}
\usepackage[utf8]{inputenc} 
\usepackage[T1]{fontenc}
\usepackage{url}
\usepackage{ifthen}
\usepackage{cite}
\usepackage{amssymb}
\usepackage[cmex10]{amsmath} 
\usepackage{amsthm}
\usepackage{bm}
\usepackage{algorithm}
\usepackage{algpseudocode}
\usepackage{mathtools}

\newtheorem{theorem}{Theorem}
\newtheorem{proposition}{Proposition}
\newtheorem{corollary}{Corollary}
\theoremstyle{remark}
\newtheorem{remark}{Remark}

\usepackage[pdfversion=1.4,hidelinks, bookmarks=false]{hyperref}
\usepackage{cleveref}

\DeclareMathOperator{\E}{\mathbb{E}}

\begin{document}
\title{Importance Sampling for Event Discovery via Guesswork} 

\author{
  \IEEEauthorblockN{Asaf Cohen}
 \IEEEauthorblockA{School of Electrical and Computer Engineering\\
                    Ben-Gurion University of the Negev\\
                    Beer-Sheva\\
                    Email: coasaf@bgu.ac.il} }
\maketitle


\begin{abstract}
Traditional importance sampling (IS) is designed to estimate rare-event probabilities by minimizing estimator variance. However, many applications prioritize rapid discovery: the generation of a trajectory within a rare set $A_n$. This requires a shift from ensemble-based estimation to a design principle focused on the hitting time $\tau_{A_n} := \inf\{t \ge 1 : Y_t^n \in A_n\}$. 

We formalize a Quality of Discovery problem as the problem of minimizing the description length (surprisal) of the discovered trajectory under the nominal model $p$. We prove that minimizing this description length is equivalent to minimizing the nominal rank exponent $J_{\mathrm{rank}}(q_n) := \lim_{n\to\infty} \frac{1}{n} \log G_n(Y^n)$, where $G_n(x^n)$ is the guesswork of sequence $x^n$. For i.i.d.\ models and type-defined rare sets $\Gamma$, we show that while classical IS targets the mass-dominating type $Q_{\mathrm{IS}}^* \in \arg\min_{Q \in \Gamma} D(Q\|p)$, discovery optimality is achieved by $Q_{\mathrm{GW}}^* \in \arg\min_{Q \in \Gamma} [H(Q) + D(Q\|p)]$. This framework identifies a fundamental rule: minimizing the guesswork exponent ensures the discovered sequence is the "least surprising" representative of the set relative to the nominal model's search order. We further demonstrate that under budgetary constraints, this exponent serves as a lexicographic tie-breaker when the hitting-time minimizer is not unique. This establishes $H(Q) + D(Q\|p)$ as a natural objective for discovery-based importance sampling, providing a formal bridge between randomized sampling and systematic search.
\end{abstract}

\section{Introduction}

Importance sampling (IS) is a powerful variance-reduction technique traditionally used to estimate the probability of rare events. By introducing an alternative sampling distribution $q_n$, IS allows for the efficient representation of the probability mass of a rare set $A_n$, which would otherwise require an exponential number of samples to observe under a nominal distribution $p_n$. In many modern applications, however, the primary objective shifts from probability estimation to \emph{rapid discovery}: the generation of a specific trajectory within a rare set as quickly as possible.

This shift to a discovery-based goal requires a design principle focused on the \emph{hitting time} $\tau_{A_n} := \inf\{t \ge 1 : Y_t^n \in A_n\}$, where $J_{\text{hit}}(q_n) := \limsup_{n\to\infty} \frac{1}{n} \log \mathbb{E}_{q_n}[\tau_{A_n}]$ is the hitting-time exponent. Rapid discovery necessitates a different proposal distribution $Q^*$, centered on discovery speed rather than variance minimization. We formalize this as the \emph{Quality of Discovery}, which seeks to hit the target set in subexponential time (ensuring $J_{\text{hit}}(q_n) = 0$) while optimizing the qualitative nature of the discovered trajectory.

Searching for a distribution that efficiently hits a target connects importance sampling to the field of \emph{guesswork}. Guesswork centers on the search effort required to identify a realization of a random vector through sequential queries. The guesswork exponent, governed by the nominal rank $G_n(x^n)$, measures the number of queries required to identify $x^n$ when searching in decreasing order of probability. Notably, this guesswork exponent is achievable not only through such a deterministic search but also through randomized guessing strategies, where samples are drawn independently according to a specified proposal distribution \cite{SHBCM18}.

To refine our choice among potential discovery proposals, we introduce a \emph{minimal description length} criterion. This approach seeks the trajectory that is most consistent with the nominal model: the "least surprising" representative of the rare set. We prove that minimizing the expected description length (surprisal) under the nominal model is equivalent to minimizing the guesswork exponent. This provides a concrete and new optimization problem:
\begin{equation}\label{our optimization problem}
\begin{aligned}
& \underset{q_n}{\text{minimize}} & & \mathbb{E}_{q_n} \left[ -\frac{1}{n} \log p_n(Y^n) \mid Y^n \in A_n \right] \\
& \text{subject to} & & J_{\text{hit}}(q_n) = 0.
\end{aligned}
\end{equation}
For i.i.d. models and type-defined sets $\Gamma$, we show that the solution is the guesswork-guided proposal $Q_{\text{GW}}^* = \arg\min_{Q \in \Gamma} [H(Q) + D(Q\|p)]$.

The contributions of this paper are summarized as follows. We formalize the \emph{Quality of Discovery} as a performance metric for importance sampling focused on individual trajectory properties. We prove the equivalence between minimizing the description length of a discovered trajectory and minimizing its \emph{nominal rank exponent}. We derive the \emph{guesswork-guided proposal} $Q_{\text{GW}}^*$ as the optimal solution for discovery in i.i.d. models. Finally, we demonstrate that the guesswork exponent serves as a \emph{lexicographic tie-breaker} in constrained discovery settings where multiple proposals yield identical hitting times.

\subsection{Importance Sampling}

The references most relevant to this work are at the intersection of classical rare-event importance sampling, modern path-space sampling, and recent "make rare typical" constructions. While these distinct lines of research focus predominantly on optimizing estimation accuracy (minimizing variance), our approach shifts the focus toward optimizing discovery speed and the qualitative representativeness of trajectories through a guesswork criterion.

Early foundations in importance sampling were primarily concerned with sequential analysis and stopping times. \cite{siegmund1976importance} provided a classical treatment in this area; however, this work tunes the change of measure for accurate probability estimation rather than rapid discovery. \cite{bucklew2004introduction} is a standard reference for exponential tilting and large-deviations-based proposal design. While we utilize similar change-of-measure language, we replace this variance-efficiency metric with expected hitting time.

Similarly, \cite{SadowskyBucklew1990LDMC} formalized asymptotically efficient Monte Carlo estimation. While this work also studied exponential rates, our work centers on a search exponent driven by guesswork rather than a second-moment efficiency exponent. \cite{Bucklew2005ConditionalIS} further explored conditional estimators from an information-theoretic viewpoint. We maintains this proposal-design emphasis but, again, evaluates proposals based on the mass they place on the target set rather than estimator variance.

The connection between importance sampling, large deviations, and control was established in \cite{DupuisWang2004}, viewing optimal proposals as a game between change of measure and large deviation. Similarly, \cite{blanchet2008efficient} developed state-dependent schemes for heavy-tailed random walks. Although they target hitting events, their optimization objective remains estimator efficiency rather than direct hitting-time cost.

In the realm of alternative algorithms, trajectory biasing and making the rare typical, \cite{cerou2007adaptive} enhanced adaptive multilevel splitting, where one multiplies certain (favorable) trajectories and excludes others. Still, the focus is on estimation of (very) rare events. More recent literature has moved toward forcing rare transitions to occur more frequently. \cite{HartmannSchutte2012NoneqForcing} modified dynamics through optimal nonequilibrium forcing to sample rare transitions efficiently. While conceptually close to our goal of making rare events visible, we derive the preferred bias from nominal-probability ranking and guesswork. The recent review \cite{SinghDasLimmer2025AnnualReviewVPS} considered the broader landscape of variational path-sampling, also searching for a biased distribution, which minimizes the statistical distance to the true rare-event distribution, yet making the improbable events appear typical in simulation.

This philosophy is shared in \cite{CarolloGarrahanLesanovskyPerezEspigares2018QuantumDoob}, where an explicit construction modifies dynamics under which rare trajectories become typical. Our work expresses this same intuition within an information-theoretic, guesswork-based framework for discrete sequence models. In a similar vein \cite{DuPlainerBrekelmansDuanNoeGomesAspuruGuzikNeklyudov2024DoobsLagrangian} used variational methods to approximate Doob-transformed dynamics for rare path sampling. While their target is conditioned path sampling, ours is hitting-time optimization for discovery.

Modern computational approaches have also incorporated machine learning. \cite{RoseMairGarrahan2021RLRareTraj} utilized reinforcement learning to learn proposal dynamics. Our approach differs by changing the underlying criterion being learned from estimator quality to discovery speed. 

\subsection{Guesswork}
The study of guesswork centers on the search effort required to identify a realization of a random n-vector $X^n$ through sequential queries. It is well established that the optimal strategy for minimizing guesswork moments involves querying candidates in descending order of their likelihood. This framework has its roots in early research on sequential decoding \cite{wozencraft1957sequential,Arikan96} and the discovery of codewords under specific structural constraints \cite{PfisterSullivan}.

A cornerstone of the field is the work by Arıkan, which proved that the asymptotic growth rate of guesswork moments is governed by the source's Rényi entropy \cite{Arikan96}. While much focus has been on these large deviations and asymptotic limits \cite{christiansen2013guessing,christiansen2013guesswork}, recent investigations have produced non-asymptotic converse results that clarify the behavior at finite block lengths \cite{courtade14}.
Information-theoretic perspectives further link guesswork to one-shot or fixed-to-variable source coding without prefix constraints \cite{Szpankowski11F2V,Kontoyiannis14optimal_lossless,Courtade_Verdu_14,Kosut_universal_F2V_17}. In these models, the objective of ordering sequences by probability of appearance is functionally equivalent to the guesswork optimization problem.
More sophisticated models incorporate additional parameters, such as side information, which can significantly lower the required search effort \cite{Arikan96,4036408,christiansen2013guessing,salamatian2017centralized,Sason_Verdu_18}. Furthermore, to address constraints like memory limitations or lack of agent synchronization, researchers have explored randomized guessing strategies \cite{SHBCM18,Salamatian2020CentralizedvsDecentralized,Merhav2020RandomizedGuessing,Cohen2022GuessingWithDistortion}. In these setups, guesses are drawn independently according to a specified distribution rather than following a deterministic list.

Finally, note that the i.i.d. setting provides a clear conceptual proof-of-concept rather than a final computational goal. It allows us to isolate the guesswork exponent (discovery) from the large-deviations exponent (probability) in the simplest possible environment. While this distinction is often hidden in complex models, where rare events are defined by intricate paths and proposals are restricted to specific dynamics, the i.i.d. analysis identifies the correct objective. We view this as a fundamental blueprint: when mathematical solutions are out of reach in high-dimensional systems, this framework guides us to choose proposals that target the most "discoverable" trajectories rather than just the most probable regions. 
\section{Preliminaries}

\subsection{Notation}

Throughout, \(\mathcal{X}\) denotes a finite alphabet, and \(\mathcal{X}^n\) the set of all length-\(n\) sequences over \(\mathcal{X}\). Elements of \(\mathcal{X}\) are denoted by \(a\), while sequences are written as \(x^n=(x_1,\dots,x_n)\in\mathcal{X}^n\).

Random variables are denoted by uppercase letters, such as \(X\), and their realizations by lowercase letters, such as \(x\). Random sequences are written as \(X^n=(X_1,\dots,X_n)\), where each \(X_i\) takes values in \(\mathcal{X}\). Probability distributions on \(\mathcal{X}\) are denoted by \(P,Q,p,q\), according to context.

For a sequence \(x^n\in\mathcal{X}^n\), its empirical distribution, or \emph{type}, is the probability mass function \(\hat P_{x^n}\) on \(\mathcal{X}\). For a type \(Q\) on \(\mathcal{X}\), the associated \emph{type class} is $T(Q) := \{x^n\in\mathcal{X}^n : \hat P_{x^n}=Q\}$. We write \(\mathcal{P}_n(\mathcal{X})\) for the set of all \(n\)-types on \(\mathcal{X}\).

Throughout this paper, we use the notation $a_n \doteq b_n$ to denote that $a_n$ is exponentially equivalent to $b_n$, namely,
\begin{equation*}
    \lim_{n \to \infty} \frac{1}{n} \log a_n = \lim_{n \to \infty} \frac{1}{n} \log b_n.
\end{equation*}

\subsection{Classical Importance Sampling}
Let \(p_n\) be a probability distribution on \(\mathcal{X}^n\). In many applications, one is interested in a \emph{rare event} \(A_n \subseteq \mathcal{X}^n\), whose probability under \(p_n\) is exponentially small. Direct Monte Carlo estimation of \(p_n(A_n)\) is inefficient when \(A_n\) is rare, since the number of samples required to observe even a single occurrence scales as \(1/p_n(A_n)\), which is typically exponential in \(n\). Importance sampling addresses this problem by introducing an alternative sampling distribution \(q_n\) on \(\mathcal{X}^n\), such that samples are drawn according to \(q_n\), rather than \(p_n\). The probability \(p_n(A_n)\) can then be written as
\[
p_n(A_n)
=
\mathbb{E}_{q_n}\left[
\mathbf{1}\{X^n \in A_n\}
\cdot
\frac{p_n(X^n)}{q_n(X^n)}
\right],
\qquad X^n \sim q_n.
\]
This yields the unbiased estimator
\[
\hat{p}_N(A_n)
=
\frac{1}{N}
\sum_{i=1}^N
\mathbf{1}\{X_i^n \in A_n\}
\cdot
\frac{p_n(X_i^n)}{q_n(X_i^n)},
\qquad X_i^n \stackrel{i.i.d.}{\sim} q_n.
\]

The key design problem in importance sampling is therefore: \emph{How should one choose \(q_n\) so that the estimator has low variance?}

\subsection{Optimal and Near-Optimal Choices}

The optimal (zero-variance) choice is $q_n^\star(x^n)=\frac{p_n(x^n)}{p_n(A_n)}
\mathbf{1}\{x^n \in A_n\}$, which yields a deterministic estimator. Yet, this choice requires knowledge of \(p_n(A_n)\), and identification of each trajectory in $A_n$. Practical schemes therefore aim to approximate \(q_n^\star\), typically by \emph{tilting} the nominal distribution \(p_n\) toward the set \(A_n\). Thus, classical importance sampling is concerned with estimating the probability \(p_n(A_n)\), designing \(q_n\) so that samples from \(q_n\) efficiently represent the mass of \(A_n\), and controlling the variance of the resulting estimator. In particular, the quality of \(q_n\) is judged by how well it captures the \emph{total probability mass} of \(A_n\).

\subsection{Guesswork}
Let \(\{x^n_{(k)}\}\) denote the ordering of \(\mathcal{X}^n\) such that $p_n(x^n_{(1)}) \ge p_n(x^n_{(2)}) \ge \cdots$. The guesswork of a sequence \(x^n\) is defined as
\[
G_n(x^n) := \#\{y^n \in \mathcal{X}^n : p_n(y^n) \ge p_n(x^n)\}.
\]
Equivalently, \(G_n(x^n)\) is the rank of \(x^n\) in decreasing \(p_n\)-order. The quantity \(G_n(x^n)\) depends only on the probability level \(p_n(x^n)\), and induces a monotone ranking of \(\mathcal{X}^n\). Thus, guesswork defines a geometry on path space in which sequences with larger \(p_n(x^n)\) are easier to discover. 

For a set \(A_n \subseteq \mathcal{X}^n\), define $G_n(A_n) := \min_{x^n \in A_n} G_n(x^n)$. Then
\[
G_n(A_n)
=
\#\Bigl\{y^n : p_n(y^n) \ge \max_{x^n \in A_n} p_n(x^n)\Bigr\}.
\]
For a set \(A_n\), the relevant quantity is not \(p_n(A_n)\), but rather $\max_{x^n \in A_n} p_n(x^n)$,
i.e., the most likely element of \(A_n\). Accordingly, \(G_n(A_n)\) measures the discovery complexity of \(A_n\) through its most accessible trajectories.
\section{Guesswork and Importance Sampling}

\subsection{From Hitting Time to Minimal Surprisal Discovery}

In many settings, the goal of importance sampling is not merely to estimate $p_n(A_n)$, but to \emph{generate} samples in $A_n$, i.e., to actually observe rare trajectories. This shift from estimation to generation leads to a different performance metric: the \emph{hitting time}
\[
\tau_{A_n}
:=
\inf\{t \ge 1 : Y_t^n \in A_n\},
\qquad Y_t^n \stackrel{i.i.d.}{\sim} q_n,
\]
for which the expected value is $\mathbb{E}_{q_n}[\tau_{A_n}] = \frac{1}{q_n(A_n)}$.
Thus, instead of minimizing estimator variance, one is led to a design problem focused on discovery speed: choose $q_n$ to maximize $q_n(A_n)$, or equivalently, to ensure the hitting-time exponent $J_{\mathrm{hit}}(q_n) := \limsup_{n\to\infty} \frac{1}{n} \log \mathbb{E}_{q_n}[\tau_{A_n}]$ is minimized.

However, in the asymptotic regime, the constraint $J_{\mathrm{hit}}(q_n) = 0$ is often easy to satisfy. For any rare set $A_n$ defined by a set of types $\Gamma$, any proposal $q_n = Q^{\otimes n}$ with $Q \in \Gamma$ ensures that the event is typical, resulting in discovery in subexponential time. Since many proposals can satisfy this requirement, we require a secondary criterion to ensure the \emph{quality} of the discovered trajectory. Specifically, we propose to select, among all proposals that achieve fast discovery, the one that minimizes the individual description length (surprisal) of the discovered trajectory under the nominal model $p_n$, as given in \eqref{our optimization problem}. This ensures that the discovery is a "normal" representative of the rare set rather than a statistical outlier. 

\paragraph{Connection to Guesswork}
For i.i.d. nominal models $p_n = P^{\otimes n}$ and product-form proposals $q_n = Q^{\otimes n}$, the objective function has a clean information-theoretic limit. By the law of large numbers for types, a trajectory $Y^n$ sampled from $Q^{\otimes n}$ and conditioned on $A_n$ will have a type close to $Q$. The objective then satisfies \cite{cohen2026importance}:
\begin{equation}
    \lim_{n \to \infty} \mathbb{E}_{Q} \left[ -\frac{1}{n} \log p_n(Y^n) \right] = H(Q) + D(Q\|P).
\end{equation}
This expression is the fundamental exponent of \emph{guesswork}, governing the nominal rank $G_n(x^n) = \#\{y^n: p_n(y^n) \ge p_n(x^n)\}$. This identity reveals that the "least surprising" trajectories are also the "most discoverable" ones: it is the specific type that appears earliest (compared to $\hat{P}_{x^n}$) in a deterministic search ordered by the nominal model. Thus, the guesswork-optimal proposal $Q_{\mathrm{GW}}^* = \arg\min_{Q \in \Gamma} [H(Q) + D(Q\|P)]$ provides the most efficient bridge between randomized sampling and systematic search.

\subsection{Pathwise guesswork exponent for a fixed sequence}
We first establishe $g(Q)$ below as the pathwise quantity that connects the proposal-design problem in importance sampling to the deterministic search in guesswork. While this pathwise analysis is the primary focus for importance sampling, we also demonstrate how this characterization recovers the classical statistical limits of guesswork. Specifically, when the target sequence $X^n$ is random and distributed according to $p^{\otimes n}$, the average of the pathwise exponent $g(Q_{X^n})$ coincides with Ar\i{}kan's guesswork exponent. The proofs are in \cite{cohen2026importance}.
\begin{proposition}\label{prop:pathwise_guesswork}
Assume \(p_n=p^{\otimes n}\). Let \(x^n\in\mathcal X^n\) have type \(Q_n=\hat P_{x^n}\), and suppose \(Q_n\to Q\in\mathcal P(\mathcal X)\). Then
\[
G_n(x^n)
=
\sum_{\substack{Q'\in\mathcal P_n(\mathcal X):\\
H(Q')+D(Q'\|p)\le H(Q_n)+D(Q_n\|p)}}
|T(Q')|.
\]
Consequently, $\lim_{n\to\infty}\frac{1}{n}\log G_n(x^n)=g(Q)$,
where
\[
g(Q):=
\sup_{Q'\in\mathcal P(\mathcal X):\, H(Q')+D(Q'\|p)\le H(Q)+D(Q\|p)} H(Q').
\]
\end{proposition}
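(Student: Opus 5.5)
The plan is to first obtain the exact combinatorial identity and then pass to the exponential limit with the method of types.

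\textbf{Step 1: the identity.} Under \(p_n=p^{\otimes n}\), every \(y^n\) with type \(Q'\) satisfies
\[
-\tfrac1n\log p_n(y^n)=\sum_a Q'(a)\log\tfrac{1}{p(a)}=H(Q')+D(Q'\|p).
\]
So \(p_n(y^n)\) depends on \(y^n\) only through its type, and it is a decreasing function of \(H(Q')+D(Q'\|p)\). Hence the condition \(p_n(y^n)\ge p_n(x^n)\) is equivalent to \(H(\hat P_{y^n})+D(\hat P_{y^n}\|p)\le H(Q_n)+D(Q_n\|p)\). Partitioning \(\mathcal X^n\) into type classes gives the displayed sum. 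If \(p(a)=0\) for some \(a\), the relevant terms are \(+\infty\) and the same argument goes through with the usual conventions. I will assume \(p\) has full support for simplicity.

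\textbf{Step 2: the upper bound.} Use \(|T(Q')|\le 2^{nH(Q')}\) and \(|\mathcal P_n(\mathcal X)|\le (n+1)^{|\mathcal X|}\). This gives
\[
\tfrac1n\log G_n(x^n)\le \tfrac{|\mathcal X|\log(n+1)}{n}+\sup\bigl\{H(Q') : H(Q')+D(Q'\|p)\le c_n\bigr\},
\]
where \(c_n:=H(Q_n)+D(Q_n\|p)\) and the supremum is over all of \(\mathcal P(\mathcal X)\). Since \(f(Q):=H(Q)+D(Q\|p)=-\sum_a Q(a)\log p(a)\) is linear, hence continuous, we have \(c_n\to c:=f(Q)\). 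It then remains to show that \(\phi(t):=\sup\{H(Q'): f(Q')\le t\}\) is right-continuous at \(t=c\). The constraint sets are compact and decreasing as \(t\downarrow c\), and \(H\) is continuous. Taking maximizers \(Q'_t\) and a convergent subsequence gives a limit point that is feasible at level \(c\), so \(\limsup_{t\downarrow c}\phi(t)\le\phi(c)\). This yields \(\limsup\frac1n\log G_n\le g(Q)\).

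\textbf{Step 3: the lower bound.} Use \(|T(Q')|\ge (n+1)^{-|\mathcal X|}2^{nH(Q')}\) for \(Q'\in\mathcal P_n\). It then suffices to find, for each \(\epsilon>0\), \(n\)-types \(Q'_n\) with \(f(Q'_n)\le c_n\) and \(H(Q'_n)\ge g(Q)-\epsilon\) for all large \(n\).

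This approximation is the main obstacle, because the constraint uses \(c_n\) rather than \(c\), and \(c_n\) may lie slightly below \(c\). I will first handle left-continuity of \(\phi\) at \(c\). Take a maximizer \(Q^\star\) of \(g(Q)\). If \(c>\min_a(-\log p(a))\), the constraint set has nonempty interior. Mix \(Q^\star\) with a point mass on the most likely symbol \(a^\star\), i.e.\ form \((1-\delta)Q^\star+\delta\,\mathbf 1_{a^\star}\). This makes \(f\) strictly smaller than \(c\) by a margin \(\eta(\delta)>0\), while changing \(H\) by at most \(o(1)\) as \(\delta\to0\), by continuity of \(H\). Next, quantize this mixture to an \(n\)-type \(Q'_n\) within \(L^1\) distance \(|\mathcal X|/n\). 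Since \(f\) is Lipschitz (\(p\) has full support), \(f(Q'_n)\le c-\eta/2<c_n\) for large \(n\), and \(H(Q'_n)\ge g(Q)-\epsilon\).

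In the degenerate case \(c=\min_a(-\log p(a))\), both \(Q\) and \(Q_n\) are supported on the maximal-probability symbols. Then \(Q_n\) itself, and every type on those symbols, is feasible. So \(g(Q)=\log|\arg\max p|\) is attained directly by taking \(Q'_n\) to be the near-uniform \(n\)-type on those symbols.

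Combining Steps 2 and 3 gives \(\lim\frac1n\log G_n(x^n)=g(Q)\).
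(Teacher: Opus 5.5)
Your proof is correct and follows the same route as the paper's: the type-class identity, then $|T(Q')|\doteq e^{nH(Q')}$ with polynomially many types, then continuity. Your Steps 2--3 in fact supply what the paper's one-line appeal to continuity of $H$ and $D(\cdot\|p)$ leaves implicit, namely two-sided continuity of $t\mapsto\sup\{H(Q'):H(Q')+D(Q'\|p)\le t\}$ at $t=c$ and an $n$-type approximation that respects the moving level $c_n$.

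Two minor corrections. First, in the degenerate case $Q_n$ need not be supported on $\arg\max p$ (it may carry mass $1/n$ elsewhere); this is harmless, since $c_n\ge\min_a(-\log p(a))$ always, so every type on $\arg\max p$ remains feasible. Second, full support of $p$ is genuinely needed for the limit, not just a simplification: if $p(a)=0$ and $Q_n(a)=1/n\to Q(a)=0$, then $G_n(x^n)=|\mathcal X|^n$ while $g(Q)\le\log(|\mathcal X|-1)$. The paper makes this assumption tacitly as well.
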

\begin{corollary}\label{guesswork for set}
Assume $p_n = p^{\otimes n}$ and let $\Gamma \subseteq \mathcal{P}(\mathcal{X})$ be a set of types. For the target set $A_n(\Gamma) = \{y^n \in \mathcal{X}^n : \hat{P}_{y^n} \in \Gamma\}$, the set-guesswork satisfies
\begin{equation}
    G_n(A_n(\Gamma)) \doteq \exp\left( n \inf_{Q \in \Gamma} g(Q) \right).
\end{equation}
\end{corollary}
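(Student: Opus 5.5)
The plan is to combine the definition $G_n(A_n)=\min_{x^n\in A_n}G_n(x^n)$ with the exact type-class formula of Proposition~\ref{prop:pathwise_guesswork}. Write $F(Q):=H(Q)+D(Q\|p)=-\sum_a Q(a)\log p(a)$, which is linear and hence continuous in $Q$. Since $p_n(x^n)=\exp(-nF(\hat P_{x^n}))$, the rank of $x^n$ depends only on its type and is nondecreasing in $F(\hat P_{x^n})$. So the minimizing element of $A_n(\Gamma)$ is any sequence whose type $Q_n^\star$ minimizes $F$ over $\Gamma\cap\mathcal P_n(\mathcal X)$. The first display of the proposition then gives
\[
G_n(A_n(\Gamma))=\sum_{Q'\in\mathcal P_n(\mathcal X):\,F(Q')\le F(Q_n^\star)}|T(Q')|.
\]
Using $|\mathcal P_n(\mathcal X)|\le (n+1)^{|\mathcal X|}$ and $|T(Q')|\doteq e^{nH(Q')}$, this is exponentially equivalent to $\exp\bigl(n\max\{H(Q'):Q'\in\mathcal P_n,\,F(Q')\le F(Q_n^\star)\}\bigr)$.

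Next I will reduce the target exponent to a function of a single scalar. Define $h(c):=\sup\{H(Q'):F(Q')\le c\}$. This function is nondecreasing and concave, since $H$ is concave and $F$ is linear, so it is continuous on the interior of its domain $[\min_a(-\log p(a)),\infty)$. Because $g(Q)=h(F(Q))$ and $h$ is nondecreasing, we get $\inf_{Q\in\Gamma}g(Q)=h(F^\star)$ with $F^\star:=\inf_{Q\in\Gamma}F(Q)$. It then suffices to show $F(Q_n^\star)\to F^\star$, and also that the discrete maximum over $n$-types converges to $h$ evaluated at the limit.

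The main obstacle is this last approximation step, and it needs a mild regularity assumption on $\Gamma$ (implicit in the statement), for instance $\Gamma$ being the closure of its interior or $\Gamma\cap\mathcal P_n$ becoming dense in $\Gamma$. Under that assumption $F(Q_n^\star)\downarrow$ approximately to $F^\star$ by continuity of $F$; without it the corollary can fail, since $\Gamma$ might contain no $n$-types at all. For the lower bound, I will take a near-maximizer $Q'$ of $h(F^\star)$ with $F(Q')<F^\star$ strictly, using continuity of $h$ or a slight perturbation when $F^\star$ sits at the left endpoint. I then approximate it by $n$-types $Q'_n$ with $F(Q'_n)\le F(Q_n^\star)$ and $H(Q'_n)\to H(Q')$. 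For the upper bound, $H(Q')\le h(F(Q_n^\star))\to h(F^\star)$, using right-continuity of the concave monotone $h$, which holds at interior points. Alternatively, both bounds follow by applying the proposition's limit statement $\frac1n\log G_n(x^n)\to g(Q)$ along a sequence $x^n$ of type $Q_n^\star$, after passing to a convergent subsequence $Q_n^\star\to Q^\star\in\bar\Gamma$ and checking $g(Q^\star)=\inf_\Gamma g$ through $F(Q^\star)=F^\star$.
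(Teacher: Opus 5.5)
Your proposal is correct and follows the same route as the paper: the set-guesswork is attained at the $F$-minimizing type $Q_n^\star\in\Gamma\cap\mathcal P_n(\mathcal X)$, and Proposition~\ref{prop:pathwise_guesswork} then gives the exponent. You are also more careful than the paper, since you rightly flag that its passage from $\min_{\Gamma\cap\mathcal P_n(\mathcal X)}$ to $\inf_{\Gamma}$ silently requires a regularity condition such as $\Gamma\subseteq\overline{\mathrm{int}\,\Gamma}$. The only loose end is right-continuity of $h$ at the left endpoint $c_0=\min_a(-\log p(a))$: concavity alone does not give it, but it follows from compactness of $\{F\le c\}$ and continuity of $H$, so your upper bound also covers the case $F^\star=c_0$.
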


For completeness we show  how this pathwise characterization recovers Ar\i{}kan's classical guesswork exponent when the target sequence is random and distributed according to \(p^{\otimes n}\). 
\if{0}
Note, however, that $g(Q) \ge H(Q)$ and the inequality may be strict.\footnote{For example, consider a binary alphabet with a nominal distribution $p = (0.2, 0.8)$ and a type $Q = (0.6, 0.4)$. In this case, there exist other types $Q'$ such that $H(Q') + D(Q'\|p) \le H(Q) + D(Q\|p)$ but $H(Q') > H(Q)$, leading to $g(Q) > H(Q)$.} Hence, when taking the sum over all types, the objective becomes $\rho g(Q) - D(Q\|p)$, which does not recover Arıkan's exponent directly. Recovery still works because the global optimization balances the benefit of a larger pathwise exponent with the corresponding divergence cost: any type $Q$ that yields a high $g(Q)$ must also have a high nominal cost, which exactly offsets the gain in the global supremum. 
\fi
\cite{cohen2026importance} includes the complete proof.
\begin{proposition}[Recovering Arikan's Exponent]
\label{prop:arikan_average}
Let $X^n$ be an i.i.d. sequence drawn according to $p^{\otimes n}$. Let $g(Q)$ be the pathwise guesswork exponent for a sequence of type $Q$. Then, the asymptotic growth of the $\rho$-th moment of guesswork satisfies:
\begin{equation*}
\lim_{n \to \infty} \frac{1}{n} \log \mathbb{E}_{p_n} [G_n(X^n)^\rho] = \sup_{Q \in \mathcal{P}(\mathcal{X})} \left[ \rho g(Q) - D(Q | p) \right].
\end{equation*}
Furthermore, this optimization recovers Arikan's Exponent:
\begin{equation*}
\sup_{Q \in \mathcal{P}(\mathcal{X})} \left[ \rho g(Q) - D(Q | p) \right] = \rho H_{\frac{1}{1+\rho}}(p),
\end{equation*}
where $H_\beta(p) = \frac{1}{1-\beta} \log \sum_{a \in \mathcal{X}} p(a)^\beta$ is the R\'enyi entropy of order $\beta$.
\end{proposition}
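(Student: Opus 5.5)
The proof has two parts: the method of types gives the moment exponent, and then a convex-duality computation identifies its value with Arıkan's formula.

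\emph{Step 1 (moment exponent).} Group sequences by type. Since $G_n(x^n)$ depends on $x^n$ only through $p^{\otimes n}(x^n)$, it is constant on each type class. Hence
\[
\mathbb{E}_{p_n}[G_n(X^n)^\rho]=\sum_{Q_n\in\mathcal P_n(\mathcal X)} p^{\otimes n}(T(Q_n))\,G_n(x^n_{Q_n})^\rho ,
\]
where $x^n_{Q_n}$ is any sequence of type $Q_n$. The number of types is polynomial, and $p^{\otimes n}(T(Q_n))\doteq e^{-nD(Q_n\|p)}$. By the first (exact) part of Proposition~\ref{prop:pathwise_guesswork}, $G_n$ is a sum of polynomially many type-class sizes, so $\frac1n\log G_n(x^n_{Q_n}) = g_n(Q_n)+o(1)$ uniformly in $Q_n$, where $g_n$ is the supremum of $H(Q')$ over $n$-types $Q'$ with $H(Q')+D(Q'\|p)\le H(Q_n)+D(Q_n\|p)$. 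The exponent is therefore $\max_{Q_n}[\rho g_n(Q_n)-D(Q_n\|p)]$. To pass to the limit $\sup_Q[\rho g(Q)-D(Q\|p)]$, I will use that $H+D(\cdot\|p)=-\sum_a Q(a)\log p(a)$ is linear and $H$ is continuous on the compact simplex.

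This limit is the main obstacle, because $g$ can be discontinuous where the level set of the linear functional changes. For the lower bound I approximate an optimizing $Q$ and the corresponding $Q'$ by $n$-types, placing $Q'$ slightly inside the constraint so it stays admissible. For the upper bound I use upper semicontinuity of $g$: it is a sup of a continuous function over a closed-set-valued map that is monotone in the level.

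\emph{Step 2 (evaluate the variational formula).} Write $\ell(Q):=-\sum_a Q(a)\log p(a)=H(Q)+D(Q\|p)$. Then
\[
\sup_Q[\rho g(Q)-D(Q\|p)]=\sup_{Q,Q':\ \ell(Q')\le\ell(Q)}[\rho H(Q')-D(Q\|p)].
\]
For fixed $Q'$ the best $Q$ minimizes $D(Q\|p)$ subject to $\ell(Q)\ge \ell(Q')$. Taking $Q=Q'$ shows the whole expression is at least $\sup_{Q'}[\rho H(Q')-D(Q'\|p)]$. Solving this by Lagrange multipliers gives the tilted law $Q'\propto p^{1/(1+\rho)}$ and the value $(1+\rho)\log\sum_a p(a)^{1/(1+\rho)}=\rho H_{1/(1+\rho)}(p)$. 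This is the standard Arıkan computation, and it gives the lower bound.

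For the matching upper bound, fix a pair $(Q,Q')$ with $\ell(Q')\le \ell(Q)$. Then $\rho H(Q')-D(Q\|p)=\rho H(Q')-\ell(Q)+H(Q)\le \rho H(Q')-\ell(Q')+H(Q)$. The naive reduction is not quite enough: I must show $\rho H(Q')+H(Q)-\ell(Q')\le(1+\rho)\log\sum_a p(a)^{1/(1+\rho)}$, which needs more than $\ell(Q')\le\ell(Q)$ alone.

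My first attempt is a direct Hölder/Lagrangian argument. Introduce a multiplier $\lambda\ge0$ for the constraint and bound $\rho H(Q')-D(Q\|p)+\lambda(\ell(Q)-\ell(Q'))$ by maximizing separately over $Q$ and $Q'$. The maximization over $Q$ gives $\log\sum_a p(a)^{1-\lambda}$. The maximization over $Q'$ gives $\rho\log\sum_a p(a)^{\lambda/\rho}$. Choosing $\lambda=\rho/(1+\rho)$ makes both exponents equal to $1/(1+\rho)$, so the sum is exactly $(1+\rho)\log\sum_a p(a)^{1/(1+\rho)}$.

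This weak-duality bound closes the argument. Combined with the lower bound, it gives the claimed equality $\sup_Q[\rho g(Q)-D(Q\|p)]=\rho H_{1/(1+\rho)}(p)$.
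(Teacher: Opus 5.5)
Your argument is correct. Your Step 1 is more careful than the paper's, which simply substitutes $G_n(x^n)\doteq e^{ng(Q)}$ into the type sum without addressing uniformity or the discontinuity of $g$. Your sandwich via $g_n$ fixes this. The upper bound is in fact immediate from $g_n\le g$ on $n$-types. The lower bound can be taken from $G_n(x^n)\ge|T(\hat P_{x^n})|$ once Step 2 is known, which avoids the ``slightly inside the constraint'' perturbation and its degenerate cases.

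The genuine difference is the upper bound in Step 2. The paper closes exactly the naive reduction you abandoned. Since $Q$ is feasible for its own constraint, $H(Q)\le g(Q)=H(R)$ for the maximizer $R$. With $C=\ell$ this gives $\rho g(Q)-D(Q\|p)=\rho H(R)+H(Q)-C(Q)\le(1+\rho)H(R)-C(R)=\rho H(R)-D(R\|p)$. The two-variable problem thus collapses to the one-variable Ar\i{}kan problem, which the paper then solves with the tilted law $P_\rho$. In your pair language, the only missing ingredient was that you may assume $H(Q')\ge H(Q)$ without loss of generality.

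Your weak-duality route instead gives an explicit certificate, $\lambda=\rho/(1+\rho)$, and the computation checks out. Both separated maximizers equal $P_\rho\propto p^{1/(1+\rho)}$, so $(P_\rho,P_\rho)$ is feasible with the constraint tight. That is why there is no duality gap and the upper and lower bounds meet.

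The paper's route is shorter and reuses a known variational formula. Yours is self-contained, avoids the auxiliary function $\phi$, and identifies the optimal pair directly. It would also carry over to constraint structures where the $H(Q)\le g(Q)$ shortcut is unavailable.
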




\subsection{Guesswork-Guided Importance Sampling Design}
We can now state the optimality of guesswork-guided sampling.
\begin{theorem}
Assume a nominal i.i.d. model $p_n = p^{\otimes n}$ and a rare event $A_n(\Gamma)$ defined by a set of target types $\Gamma \subseteq \mathcal{P}(\mathcal{X})$. Let $q_n^* = (Q_{\mathrm{GW}}^*)^{\otimes n}$ be the proposal distribution where $Q_{\mathrm{GW}}^* \in \arg\min_{Q \in \Gamma} g(Q)$. Then, the discovery process defined in Algorithm 1 below achieves the following:

\begin{enumerate}
    \item \textbf{Subexponential Hitting Time:} The hitting-time exponent $J_{\mathrm{hit}}(q_n^*)$ satisfies
    \[
    J_{\mathrm{hit}}(q_n^*) := \limsup_{n\to\infty} \frac{1}{n} \log \mathbb{E}_{q_n^*}[\tau_{A_n}] = 0.
    \]
    Consequently, the expected number of samples required for discovery grows at most polynomially with $n$.

    \item \textbf{Asymptotic Rank Optimality:} The nominal rank $G_n(Y^n)$ of the discovered trajectory $Y^n$ satisfies
    \[
    \lim_{n\to\infty} \frac{1}{n} \log G_n(Y^n) = \inf_{Q \in \Gamma} g(Q) \quad \text{almost surely.}
    \]
    Furthermore, this value coincides with the set-guesswork exponent $\alpha(\Gamma) = \inf_{Q \in \Gamma} [H(Q) + D(Q\|p)]$, which is the fundamental lower bound on the search depth required to encounter any element of $A_n(\Gamma)$.
\end{enumerate}
\end{theorem}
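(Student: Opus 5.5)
The plan is to treat the two claims separately. Both rest on the method of types together with Proposition~\ref{prop:pathwise_guesswork} and Corollary~\ref{guesswork for set}. Throughout I assume $\Gamma$ is regular enough that $Q_{\mathrm{GW}}^*$ can be approximated by $n$-types $Q_n \in \Gamma$ with $Q_n \to Q_{\mathrm{GW}}^*$; for instance, $\Gamma$ is the closure of its interior, or $Q_{\mathrm{GW}}^*$ lies in the interior of $\Gamma$.

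\textbf{Hitting time.} Since $\mathbb{E}_{q_n^*}[\tau_{A_n}] = 1/q_n^*(A_n(\Gamma))$, it suffices to show that $q_n^*(A_n(\Gamma))$ decays subexponentially. We have
\[
q_n^*(A_n(\Gamma)) \ge (Q_{\mathrm{GW}}^*)^{\otimes n}(T(Q_n)) \ge (n+1)^{-|\mathcal X|} e^{-nD(Q_n\|Q_{\mathrm{GW}}^*)}.
\]
The divergence $D(Q_n\|Q_{\mathrm{GW}}^*)$ tends to $0$ because $Q_n \to Q_{\mathrm{GW}}^*$ and the supports are compatible; this requires choosing $Q_n$ supported in $\mathrm{supp}(Q_{\mathrm{GW}}^*)$, which is possible by rounding. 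Consequently $\frac1n \log \mathbb{E}[\tau_{A_n}] \to 0$. If $Q_{\mathrm{GW}}^*$ is interior, Sanov's theorem gives $q_n^*(A_n) \to 1$, and the expected number of samples is then even bounded. Either way this yields the claim that discovery takes at most subexponentially many samples, and polynomially many when $n\,D(Q_n\|Q_{\mathrm{GW}}^*)=O(\log n)$.

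\textbf{Rank of the discovered trajectory.} The discovered $Y^n$ is distributed as $q_n^*$ conditioned on $A_n(\Gamma)$. The first step is to show that its type $\hat P_{Y^n}$ converges to $Q_{\mathrm{GW}}^*$ almost surely. The conditional probability that $\hat P_{Y^n}$ lies outside a $\delta$-ball around $Q_{\mathrm{GW}}^*$ is at most
\[
(n+1)^{|\mathcal X|} e^{-n c_\delta}/q_n^*(A_n),
\]
where $c_\delta>0$ is the minimum divergence from $Q_{\mathrm{GW}}^*$ outside the ball. Combined with the subexponential lower bound on $q_n^*(A_n)$ from the first part, this bound is summable in $n$, so Borel--Cantelli gives almost-sure convergence. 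The second step is to pass to the limit in the rank. Proposition~\ref{prop:pathwise_guesswork} is stated for a deterministic convergent type sequence, and its proof gives a uniform version: $\frac1n\log G_n(x^n) = g(\hat P_{x^n}) + o(1)$ uniformly in $x^n$. This reduces the claim to continuity of $g$ at $Q_{\mathrm{GW}}^*$, after which $\frac1n \log G_n(Y^n) \to g(Q_{\mathrm{GW}}^*) = \inf_\Gamma g$. Corollary~\ref{guesswork for set} then identifies this value with the set-guesswork exponent, which is the search-depth lower bound.

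\textbf{Main obstacle.} I expect the hard step to be continuity of $g$ and the identification with $\alpha(\Gamma)$. The function $g$ is a supremum of $H$ over the sublevel set $\{Q' : H(Q')+D(Q'\|p)\le c\}$. Since $H(Q')+D(Q'\|p) = -\sum_a Q'(a)\log p(a)$ is linear in $Q'$, the sublevel set is a half-space intersected with the simplex. Hence $g(Q)=\phi(H(Q)+D(Q\|p))$, where $\phi(c)$ is the maximal entropy on that half-space. The function $\phi$ is nondecreasing and concave, hence continuous on the interior of its domain, and the endpoint cases can be handled by hand. This reduction also shows that $\arg\min_\Gamma g$ and $\arg\min_\Gamma [H+D]$ coincide in the sense that minimizing $H+D$ minimizes $g$. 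However, $\inf_\Gamma g = \phi(\alpha(\Gamma))$ and not $\alpha(\Gamma)$ itself, since $g(Q)\ge H(Q)$ and $\phi(c)\le c$ with equality only in special cases. The literal equality with $\alpha(\Gamma)$ therefore needs care. I would first check whether the paper intends $\alpha(\Gamma)$ as an exponent of the threshold level rather than of the rank. Otherwise, I would prove the statement in the form $\inf_\Gamma g = \phi(\alpha(\Gamma))$, which is the quantity given by Corollary~\ref{guesswork for set}.
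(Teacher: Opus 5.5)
Your argument follows the same route as the paper's proof. You use $\mathbb{E}_{q_n^*}[\tau_{A_n}]=1/q_n^*(A_n(\Gamma))$ with a method-of-types lower bound. You then concentrate the conditioned type at $Q_{\mathrm{GW}}^*$, which is its own I-projection onto $\Gamma$, and apply Proposition~\ref{prop:pathwise_guesswork}. You also supply several things the paper only asserts:
\begin{itemize}
\item a regularity hypothesis on $\Gamma$ (without one the theorem fails: $\Gamma=\{Q\}$ with $Q$ irrational gives $A_n(\Gamma)=\emptyset$);
\item the Borel--Cantelli argument for almost-sure convergence;
\item the uniform form of Proposition~\ref{prop:pathwise_guesswork}, together with continuity of $g=\phi\circ C$;
\item the observation that $J_{\mathrm{hit}}=0$ by itself gives subexponential, not polynomial, growth.
\end{itemize}

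Your diagnosis of the final clause is correct, and it is the paper's proof that errs there. Its last line only shows that minimizers of $C(Q)=H(Q)+D(Q\|p)$ also minimize $g=\phi\circ C$. It then equates this argmin set with the number $\alpha(\Gamma)$. The actual value is $\inf_\Gamma g=\phi(\alpha(\Gamma))$, and $\phi(c)\le c$ with equality only at $c=H(p)$. Concretely, take $p=(0.8,0.2)$ and $\Gamma=\{Q:Q(0)\ge 0.99\}$. The algorithm returns $0^n$, whose rank is $1$, so $\inf_\Gamma g=0$, whereas $\alpha(\Gamma)=\log(1/0.8)>0$. The quantity $\alpha(\Gamma)$ is the exponent of $\max_{x^n\in A_n(\Gamma)}p_n(x^n)$, not of the rank. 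Proving $\inf_\Gamma g=\phi(\alpha(\Gamma))$, which is exactly the exponent in Corollary~\ref{guesswork for set}, is the right repair.

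One correction to your own hypothesis. The condition ``$\Gamma$ is the closure of its interior'' does not make support-preserving rounding possible when $Q_{\mathrm{GW}}^*$ lies on a face of the simplex. Take $p=(0.5,0.3,0.2)$ and $\Gamma=\{Q: Q(3)\ge 2|Q(1)-1/\sqrt{2}|\}$, which is convex with nonempty interior.
\begin{itemize}
\item The unique minimizer of $g$ over $\Gamma$ is $(1/\sqrt{2},1-1/\sqrt{2},0)$.
\item No $n$-type in $\Gamma$ has $Q(3)=0$, since that would force $Q(1)=1/\sqrt{2}$.
\item Hence $q_n^*(A_n(\Gamma))=0$ for all $n$.
\end{itemize}
State the assumption as the property you actually use: there exist $n$-types $Q_n\in\Gamma$ with $\mathrm{supp}\,Q_n\subseteq\mathrm{supp}\,Q_{\mathrm{GW}}^*$ and $Q_n\to Q_{\mathrm{GW}}^*$. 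Your second alternative, $Q_{\mathrm{GW}}^*$ in the interior of $\Gamma$ relative to the simplex, already guarantees this.
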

\begin{proof}
The hitting time $\tau_{A_n}$ for a sequence of i.i.d. trials $Y_t^n \sim q_n^*$ follows a geometric distribution with success probability $\pi_n = q_n^*(A_n(\Gamma))$. The expected hitting time is given by $\mathbb{E}_{q_n^*}[\tau_{A_n}] = 1/\pi_n$. By the property of the optimal proposal $q_n^* = (Q_{\mathrm{GW}}^*)^{\otimes n}$, where $Q_{\mathrm{GW}}^* \in \Gamma$, the law of large numbers for types implies that the event $A_n(\Gamma)$ is typical under $q_n^*$. Specifically, by Sanov's Theorem:
\[
-\lim_{n\to\infty} \frac{1}{n} \log q_n^*(A_n(\Gamma)) = \inf_{Q \in \Gamma} D(Q \| Q_{\mathrm{GW}}^*) = 0,
\]
since $Q_{\mathrm{GW}}^* \in \Gamma$. Therefore, the hitting-time exponent is:
\[
J_{\mathrm{hit}}(q_n^*) = \limsup_{n\to\infty} \frac{1}{n} \log \frac{1}{q_n^*(A_n(\Gamma))} = 0.
\]

Let $Y^n$ be the trajectory discovered by Algorithm 1. By construction, $Y^n$ is a sample from $(Q_{\mathrm{GW}}^*)^{\otimes n}$ conditioned on the event $\{ \hat{P}_{Y^n} \in \Gamma \}$. As $n \to \infty$, the conditional distribution of the type $\hat{P}_{Y^n}$ concentrates on the point in $\Gamma$ that is closest in KL-divergence to the proposal $Q_{\mathrm{GW}}^*$. Since $Q_{\mathrm{GW}}^* \in \Gamma$, this point is $Q_{\mathrm{GW}}^*$ itself. Thus, $\hat{P}_{Y^n} \to Q_{\mathrm{GW}}^*$ almost surely. \cite[\Cref{Asymptotic Discovery Quality}]{cohen2026importance} solves this rigorously even when $Q$ is not allowed to be in $\Gamma$. Applying Proposition \ref{prop:pathwise_guesswork}, the normalized nominal rank of $Y^n$ satisfies:
\[
\lim_{n\to\infty} \frac{1}{n} \log G_n(Y^n) = g(Q_{\mathrm{GW}}^*) = \min_{Q \in \Gamma} g(Q).
\]
To show this coincides with the search exponent $\alpha(\Gamma)$, note that since $g(Q)$ is a non-decreasing function of $C(Q):=H(Q)+D(Q\|p)$, the minimum is achieved by the same $Q$ that minimizes $C(Q)$. Thus $\arg \min_{Q \in \Gamma} g(Q) = \arg \min_{Q \in \Gamma} [H(Q) + D(Q \| P)] = \alpha(\Gamma)$.
\end{proof}
\begin{algorithm}\label{our main alg}
\caption{Guesswork-Guided Event Discovery}
\label{alg:gw_discovery}
\begin{algorithmic}[1]
\State \textbf{Input:} Nominal distribution $p$, target set $\Gamma$, pathwise exponent $g(Q)$.
\State \textbf{Output:} Discovered trajectory $Y^n \in A_n(\Gamma)$.

\Statex \Comment{\textit{Step 1: Solve the Discovery Optimization}}
\State Find $Q_{\mathrm{GW}}^* \in \arg\min_{Q \in \Gamma} g(Q)$
\Statex \hspace{1em} \small (Note: $\lim_{n\to\infty} \frac{1}{n} \log G_n(A_n(\Gamma)) = g(Q_{\mathrm{GW}}^*)$)

\Statex \Comment{\textit{Step 2: Construct the Proposal}}
\State Define $q_n(y^n) = \prod_{i=1}^n Q_{\mathrm{GW}}^*(y_i)$

\Statex \Comment{\textit{Step 3: Importance Sampling for Discovery}}
\Repeat
    \State Sample $Y^n \sim (Q_{\mathrm{GW}}^*)^{\otimes n}$
\Until{$\hat{P}_{Y^n} \in \Gamma$}

\State \Return $Y^n$
\end{algorithmic}
\end{algorithm}
\begin{remark}
Note the distinction between classical IS and the guesswork approach. 
While any \(Q \in \Gamma\) renders the rare event typical under the proposal distribution, 
\(Q^*_{\mathrm{GW}}\) ensures that the discovered trajectory is also early in the nominal 
probability order. The classical proposal 
\(Q^*_{\mathrm{IS}} \in \arg\min D(Q\|p)\) may incur a strictly larger rank exponent, 
i.e., \(g(Q^*_{\mathrm{IS}}) > \alpha(\Gamma)\), corresponding to a higher search-depth cost.

To illustrate, consider \(p = [0.8, 0.2]\) and two failure modes: 
\(Q_A = [0.6, 0.4]\), a ``likely but messy'' mode, and 
\(Q_B = [0.99, 0.01]\), a ``rarer but simple'' mode. 
IS minimizes \(D(Q\|p)\), yielding 
\(D(Q_A\|p) \approx 0.151\) bits and 
\(D(Q_B\|p) \approx 0.265\) bits, 
and therefore selects \(Q_A\). Guesswork-guided sampling minimizes the surprisal 
\(H(Q) + D(Q\|p)\), yielding 
\(1.122\) bits for \(Q_A\) and 
\(0.346\) bits for \(Q_B\). 
Thus, GW selects \(Q_B\), since its significantly lower entropy reduces the effective search volume, 
placing it much earlier in the probability-ordered search despite its higher KL cost.
\end{remark}

\subsection{Constrained Discovery and Symmetry Breaking}In practical applications, the ability to shift the nominal distribution is often limited by a budget $\delta > 0$, requiring the proposal to satisfy $D(Q\|p) \le \delta$. If $\delta < \inf_{Q \in \Gamma} D(Q\|p)$, the rare event $A_n(\Gamma)$ remains rare under any feasible proposal, and the hitting-time exponent $J_{\mathrm{hit}}$ is strictly positive. In this regime, the design problem is often ill-posed under classical criteria. If multiple proposals achieve the same minimum distance to the rare set, the mass-based objective $\min D(Q\|p)$ provides no guidance on which "direction" to take. Guesswork serves as a fundamental lexicographic tie-breaker.

\begin{remark}
For \(p=[0.8,0.2]\), $\Gamma=\{q<0.75\}\cup\{q>0.84704\}$,
and $\delta=0.05$, the boundary of the rare set then contains two types, $Q_2=[0.75,0.25]$, and $Q_1=[0.84704,0.15296]$,
which are exactly equidistant in KL. Classical importance sampling is therefore indifferent. In contrast, the guesswork criterion strictly prefers the right boundary point $Q_1$. Moreover, under the budget constraint the guesswork optimizer moves further into the right component, selecting $Q_{GW}^* \approx [0.89745,0.10255]$.
Thus, rather than merely selecting among KL-equivalent boundary points, guesswork breaks the symmetry and drives the solution toward the type that minimizes search depth within the admissible set.
\end{remark}

We define the optimal constrained discovery proposal as the solution to a lexicographic optimization problem. Let $\mathcal{Q}^*_\delta$ be the set of proposals that minimize the discovery delay $J_{\mathrm{hit}}$ within the budget $\delta$:
\begin{equation}
\mathcal{Q}^\delta = \arg \min_{Q : D(Q||p) \le \delta} \left( \inf_{R \in \Gamma} D(R||Q) \right).
\end{equation}

The guesswork-refined proposal $Q_{\mathrm{GW}}$ is then chosen to maximize the quality of the discovered trajectory: $Q_{\mathrm{GW}}^* = \arg\min_{Q \in \mathcal{Q}^*\delta} g(Q_{\text{proj}})$,
where $Q_{\text{proj}}$ is the asymptotic type in $\Gamma$ targeted by the proposal $Q$.
\begin{corollary}[Guaranteed Quality Gain]
Let $Y^n_{\mathrm{GW}}$ be the trajectory discovered by $Q_{\mathrm{GW}}^*$ and $Y^n_{\text{IS}}$ be a trajectory discovered by any other proposal in $\mathcal{Q}^*_\delta$. Both trajectories share the same discovery speed, yet the guesswork-guided choice satisfies:
$$\lim_{n\to\infty} \frac{1}{n} \log \frac{G_n(Y^n_{\text{IS}})}{G_n(Y^n_{\mathrm{GW}})} = g(Q_{\text{proj, IS}}) - \inf_{R \in \Gamma} g(R) \ge 0.$$
\end{corollary}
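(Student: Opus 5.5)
The plan is to prove the limit for each trajectory separately, combine the two, and then bound the difference from below.

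\emph{Pathwise limits.} For any proposal $Q\in\mathcal{Q}^*_\delta$, let $Y^n$ be the trajectory returned by the rejection loop of Algorithm~\ref{alg:gw_discovery} with $Q$ in place of $Q^*_{\mathrm{GW}}$. Then $Y^n$ has the law of $Q^{\otimes n}$ conditioned on $\{\hat P_{Y^n}\in\Gamma\}$. By the conditional limit theorem (Sanov's theorem applied to the conditioned type), $\hat P_{Y^n}$ concentrates on the $I$-projection
\[
Q_{\mathrm{proj}}\in\arg\min_{R\in\Gamma}D(R\|Q).
\]
I will assume this projection is unique, as the definition of $Q_{\mathrm{proj}}$ in the statement implicitly does. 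I will also assume $\Gamma$ is regular, i.e.\ the closure of its interior equals its closure, so that the infimum over $n$-types in $\Gamma$ converges to the infimum over $\Gamma$. Under these assumptions I would obtain $\hat P_{Y^n}\to Q_{\mathrm{proj}}$ in probability, and almost surely along $n$ via Borel--Cantelli, since the deviation probabilities decay exponentially. This is the argument cited as \cite[\Cref{Asymptotic Discovery Quality}]{cohen2026importance} for proposals outside $\Gamma$.

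\emph{Applying Proposition~\ref{prop:pathwise_guesswork}.} Along such a realization the type $Q_n=\hat P_{Y^n}$ converges, so Proposition~\ref{prop:pathwise_guesswork} gives $\frac1n\log G_n(Y^n)\to g(Q_{\mathrm{proj}})$. The limit formula in that proposition is continuous in $Q$ only where $g$ is continuous. Because $g$ is a sup over a sublevel set of the continuous function $C(Q')=H(Q')+D(Q'\|p)$, it is continuous in the level $C(Q)$, except possibly at levels where the sublevel set has an atom. I take the proposition as stated, with $Q_n\to Q$ arbitrary.

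\emph{Combining the two trajectories.} Applying this to both proposals, the log-ratio is the difference of two convergent sequences:
\[
\frac1n\log\frac{G_n(Y^n_{\mathrm{IS}})}{G_n(Y^n_{\mathrm{GW}})}\;\to\; g(Q_{\mathrm{proj,IS}})-g(Q_{\mathrm{proj,GW}}).
\]
Equal discovery speed is immediate. Both proposals lie in $\mathcal{Q}^*_\delta$, so by Sanov's theorem
\[
J_{\mathrm{hit}}=\inf_{R\in\Gamma}D(R\|Q)
\]
takes the same minimal value for both.

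\emph{Identifying the GW value and the sign.} The step I expect to be the main obstacle is identifying $g(Q_{\mathrm{proj,GW}})$ with $\inf_{R\in\Gamma}g(R)$. By definition $Q^*_{\mathrm{GW}}$ minimizes $g(Q_{\mathrm{proj}})$ only over $\mathcal{Q}^*_\delta$, and the projections of feasible proposals need not reach the global minimizer of $g$ over $\Gamma$. I would first try to show that every $R\in\Gamma$ achieving $\inf_{R\in\Gamma}g(R)$ arises as the projection of some $Q\in\mathcal{Q}^*_\delta$. This requires a KL-geometry argument: under the budget, the optimal $Q$ sit on the boundary $D(Q\|p)=\delta$, and their projections sweep the minimal-distance boundary points of $\Gamma$. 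If that fails, I would restrict the infimum to the attainable projection set
\[
\Gamma^\delta:=\{Q_{\mathrm{proj}}:Q\in\mathcal{Q}^*_\delta\}.
\]
On that restricted set, nonnegativity follows directly from the minimality in the definition of $Q^*_{\mathrm{GW}}$: $g(Q_{\mathrm{proj,GW}})\le g(Q_{\mathrm{proj,IS}})$. Nonnegativity against the unrestricted infimum also holds trivially, since $Q_{\mathrm{proj,IS}}\in\Gamma$ gives $g(Q_{\mathrm{proj,IS}})\ge\inf_{R\in\Gamma}g(R)$. So the claimed inequality is safe in either case; only the exact equality with $\inf_{R\in\Gamma}g(R)$ needs the attainability argument.
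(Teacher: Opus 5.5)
Your route (conditional limit theorem for the type of the accepted sample, Proposition~\ref{prop:pathwise_guesswork} applied to each trajectory, then subtraction) is the natural one. It is also what the paper implicitly relies on, since it states this corollary without proof. It correctly gives equal hitting exponents and, almost surely,
\[
\lim_{n\to\infty}\frac1n\log\frac{G_n(Y^n_{\mathrm{IS}})}{G_n(Y^n_{\mathrm{GW}})}=g(Q_{\mathrm{proj,IS}})-g(Q_{\mathrm{proj,GW}})\ge 0 .
\]
The sign comes from the minimality that defines $Q^*_{\mathrm{GW}}$. Your caveat about continuity of $g$ is unnecessary: $C(Q)=-\sum_a Q(a)\log p(a)$ is linear, so $\phi$ is concave and, by compactness, upper semicontinuous, hence continuous.

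The genuine gap is the step you flagged, and it cannot be closed. The attainability claim is false in general, and with it the stated equality with $\inf_{R\in\Gamma}g(R)$. The paper's own budgeted example shows this. Take $p=[0.8,0.2]$, $\Gamma=\{q<0.75\}\cup\{q>0.84704\}$ and $\delta=0.05$ bits.
\begin{itemize}
\item $C$ is decreasing in $q$, so the type $[1,0]\in\Gamma$ has sublevel set $\{[1,0]\}$. Hence $g([1,0])=0=\inf_{R\in\Gamma}g(R)$, but $D([1,0]\|p)=\log_2(1/0.8)\approx 0.32>\delta$.
\item Every feasible $Q\in\Gamma$ has hitting exponent $0$ and is its own projection. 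So $\Gamma^\delta$ is, up to boundary points, $\Gamma\cap\{D(\cdot\|p)\le\delta\}$.
\item The minimizer over $\Gamma^\delta$ is the paper's $Q^*_{\mathrm{GW}}\approx[0.89745,0.10255]$, with $g(Q^*_{\mathrm{GW}})=h(0.10255)\approx 0.477$ bits, where $h$ is the binary entropy.
\item For $Q_{\mathrm{IS}}=[0.85,0.15]\in\mathcal{Q}^*_\delta$ the true limit is $h(0.15)-h(0.10255)\approx 0.133$ bits. The corollary instead asserts $h(0.15)\approx 0.610$.
\end{itemize}
The mechanism is general. Attainable projections are confined to a KL-neighbourhood of $p$: the ball itself, or the KL-nearest part of $\partial\Gamma$ when $\delta<\inf_{R\in\Gamma}D(R\|p)$. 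By contrast, $g$ is minimized at low cross-entropy types, which are typically far from $p$ in KL.

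So your ``fallback'' is in fact the correct statement: replace $\inf_{R\in\Gamma}g(R)$ by $\min_{R\in\Gamma^\delta}g(R)=g(Q_{\mathrm{proj,GW}})$. The version with $\inf_{R\in\Gamma}g(R)$ needs the extra hypothesis that some minimizer of $g$ over $\Gamma$ lies in $\Gamma^\delta$. It appears to have been carried over from the unconstrained theorem, where $Q^*_{\mathrm{GW}}$ minimizes $g$ over all of $\Gamma$.
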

This result guarantees that even when discovery speed is prioritized or limited by a budget, the guesswork framework identifies the "entry points" of the rare set.
\section{Conclusion}
In this paper, we redefined the objective of importance sampling for discovery-based tasks. While traditional methods focus on estimating the collective probability of a rare set, many practical applications, such as security audits or stress-testing, require generating a single, "believable" trajectory as quickly as possible.

We demonstrated that the fastest way to hit a target set is not necessarily the best way to understand it. We introduced a "Quality of Discovery" metric based on the principle of minimal surprisal (or description length). We proved that the most meaningful discovery is the trajectory that is easiest to find and describe under the system’s original rules.

This guesswork-guided design principle ensures that a simulation targets easily discoverable and best-explained trajectories of the rare set. This framework provides a rigorous bridge between randomized sampling and systematic search (e.g., list-based guesswork), ensuring that discovered events (for example, failures in a system) are not just fast to find, but are the most representative examples.

\newpage


\bibliographystyle{IEEEtran}
\bibliography{importance_sampling,guesswork}

\newpage

\appendices
\section{Guesswork Proofs}

\subsection{Proof of \Cref{prop:pathwise_guesswork}}\label{proof of pathwise}
All sequences in the same type class \(T(Q')\) have nominal probability
\[
p_n(y^n)=\exp\bigl(-n[H(Q')+D(Q'\|p)]\bigr),
\qquad y^n\in T(Q').
\]
Hence the guesswork rank of \(x^n\) counts exactly those type classes whose nominal probability is at least \(p_n(x^n)\), which gives the first identity. Using the type-class estimate \(|T(Q')|\doteq e^{nH(Q')}\) and the fact that the number of \(n\)-types grows only polynomially in \(n\), the exponential growth rate is given by the largest entropy among types satisfying the same nominal-probability constraint. Continuity of \(H(\cdot)\) and \(D(\cdot\|p)\) then yields the limit along \(Q_n\to Q\).

\subsection{Proof of \Cref{guesswork for set}}\label{proof corollary}
By definition, the guesswork of a set is determined by its most likely element:
\begin{equation}
    G_n(A_n(\Gamma)) = \min_{x^n \in A_n(\Gamma)} G_n(x^n).
\end{equation}
From the pathwise characterization of guesswork for a single sequence of type $Q$ in \Cref{prop:pathwise_guesswork} , we obtain
\begin{align}
    G_n(A_n(\Gamma)) &= \min_{Q \in \Gamma \cap \mathcal{P}_n(\mathcal{X})} G_n(x^n : \hat{P}_{x^n} = Q) 
    \\
    & \doteq \exp\left( n \inf_{Q \in \Gamma} g(Q) \right).
\end{align}

\subsection{Proof of \Cref{prop:arikan_average}}
\label{app:arikan}
Group the expectation by types. If \(\hat P_{x^n}=Q\), then Proposition~\ref{prop:pathwise_guesswork} gives \(G_n(x^n)\doteq e^{n g(Q)}\). Hence, for any \(\rho>0\),
\begin{align*}
\E_{p_n}\bigl[G_n(X^n)^\rho\bigr]
&=
\sum_{Q\in\mathcal P_n(\mathcal X)}\ \sum_{x^n\in T(Q)} p_n(x^n)\, G_n(x^n)^\rho \\
&\doteq
\sum_{Q\in\mathcal P_n(\mathcal X)} \exp\bigl(n[\rho g(Q)-D(Q\|p)]\bigr).
\end{align*}
because \(|T(Q)|\doteq e^{nH(Q)}\) and \(p_n(x^n)=\exp(-n[H(Q)+D(Q\|p)])\) for every \(x^n\in T(Q)\). Since the number of \(n\)-types is subexponential in \(n\),
\[
\lim_{n\to\infty}\frac{1}{n}\log \E_{p_n}\bigl[G_n(X^n)^\rho\bigr]
=
\sup_{Q\in\mathcal P(\mathcal X)} \bigl[\rho g(Q)-D(Q\|p)\bigr].
\]

We wish to show that this is, in fact, Arikan's exponent. To this end, define the cross-entropy level
\[
C(Q):=H(Q)+D(Q\|p),
\]
and let
\[
\phi(c):=\sup_{R\in\mathcal P(\mathcal X):\, C(R)\le c} H(R).
\]
Then \(g(Q)=\phi(C(Q))\). 
\if{0}Therefore, if \(C(Q)=c\),
\[
\rho g(Q)-D(Q\|p)
=
\rho\phi(c)-c+H(Q)
\le
(\rho+1)\phi(c)-c,
\]
since \(H(Q)\le \phi(c)\). Equality is achieved by any maximizer of \(\phi(c)\).

\paragraph{Missing step in Appendix A.}
Recall that
\[
C(Q):=H(Q)+D(Q\|p)
\]
and
\[
\phi(c):=\sup_{R:\, C(R)\le c} H(R).
\]
Then
\[
g(Q)=\phi(C(Q)).
\]
\fi
Therefore
\begin{align*}
\rho g(Q)-D(Q\|p)&=\rho\phi(C(Q))-D(Q\|p)
\\
&=
\rho\phi(C(Q))-C(Q)+H(Q)
\\
&\le 
(\rho+1)\phi(C(Q))-C(Q), 
\end{align*}
where the last inequality is since, by definition of \(\phi\), $H(Q)\le \phi(C(Q))$. As a result, 
\begin{align}
\sup_Q  & \{\rho g(Q)-D(Q\|p)\}  \nonumber
\\
& \le \sup_Q\{(\rho+1)\phi(C(Q))-C(Q)\} \nonumber
\\
& \le \sup_c\{(\rho+1)\phi(c)-c\}\nonumber
\\
& = \sup_c\{ (\rho+1)\sup_{R:\,C(R)\le c}H(R)-c \}\nonumber
\\
& = \sup_c\{ \sup_{R:\,C(R)\le c}\{(\rho+1) H(R)-c\} \}\nonumber
\\
& = \sup_c\{ \sup_{R:\,C(R)\le c}\{(\rho+1) H(R)-C(R)\} \}\nonumber
\\
& \le \sup_R\{(\rho+1)H(R)-C(R)\} \nonumber
\\
& = \sup_R\{\rho H(R)-D(R\|p)\}. \nonumber
\end{align}

For the reverse inequality, since \(R\) is feasible in the definition of \(\phi(C(R))\), we have
\[
\rho g(R)-D(R\|p)
\ge
\rho H(R)-D(R\|p).
\]
Taking the supremum over \(R\) gives
\[
\sup_Q\{\rho g(Q)-D(Q\|p)\}
\ge
\sup_R\{\rho H(R)-D(R\|p)\},
\]
hence
\begin{equation}\label{two optimization problems}
\sup_{Q\in\mathcal P(\mathcal X)} \bigl[\rho g(Q)-D(Q\|p)\bigr]
=
\sup_{R\in\mathcal P(\mathcal X)} \bigl[\rho H(R)-D(R\|p)\bigr].
\end{equation}

The solution to the optimization in \eqref{two optimization problems} is well known. Define
\[
P_\rho(a):=\frac{p(a)^{1/(1+\rho)}}{\sum_{b\in\mathcal X} p(b)^{1/(1+\rho)}},
\qquad a\in\mathcal X.
\]
A direct calculation yields
\begin{align*}
\rho H(R)-D(R\|p)
&=
(\rho+1)\log \sum_{a\in\mathcal X} p(a)^{1/(1+\rho)} \\
&\qquad - (\rho+1)D(R\|P_\rho).
\end{align*}
Hence the supremum over $R$ is attained at \(R=P_\rho\), and
\begin{align*}
\sup_{R\in\mathcal P(\mathcal X)} \bigl[\rho H(R)-D(R\|p)\bigr]
&=
(\rho+1)\log \sum_{a\in\mathcal X} p(a)^{1/(1+\rho)} \\
&=
\rho H_{1/(1+\rho)}(p).
\end{align*}
This is precisely Ar\i{}kan's exponent.

\section{Asymptotic Analysis of the Discovery Objective}

In this appendix, we derive the formal limit for the expected description length (surprisal) under the nominal model for a trajectory generated by a product-form proposal $q_n = Q^{\otimes n}$. This result establishes the link between the discovery objective and the pathwise guesswork exponent.
\begin{theorem}[Asymptotic Discovery Quality]\label{Asymptotic Discovery Quality}
Let $p_n = P^{\otimes n}$ be the nominal i.i.d. model and $Y^n$ be a trajectory sampled according to the proposal $q_n = Q^{\otimes n}$, conditioned on the rare event $Y^n \in A_n(\Gamma)$. The expected nominal surprisal (Quality of Discovery) satisfies:
\begin{multline}
\lim_{n\to\infty} \mathbb{E}_{q_n} \left[ -\frac{1}{n} \log p_n(Y^n) \mid Y^n \in A_n(\Gamma) \right] \\ = 
\begin{cases} 
H(Q) + D(Q \| P) & \text{if } Q \in \Gamma \\
H(R^*) + D(R^* \| P) & \text{if } Q \notin \Gamma
\end{cases}
\end{multline}
where $R^* = \arg\min_{R \in \Gamma} D(R \| Q)$.
\end{theorem}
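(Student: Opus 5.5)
The plan is to reduce the conditional expectation to a functional of the empirical type and then use a conditional limit theorem (Gibbs conditioning via Sanov) for the type of $Y^n$ under $Q^{\otimes n}$ given $\hat P_{Y^n}\in\Gamma$. The starting point is the exact identity already used in the proof of \Cref{prop:pathwise_guesswork}: for every $y^n$,
\[
-\tfrac1n\log p_n(y^n)=C(\hat P_{y^n}),\qquad C(R):=H(R)+D(R\|P)=-\sum_a R(a)\log P(a).
\]
So the objective equals $\mathbb{E}_{q_n}[C(\hat P_{Y^n})\mid \hat P_{Y^n}\in\Gamma]$. Since $C$ is linear in $R$, it is continuous on $\mathcal P(\mathcal X)$. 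If $\operatorname{supp}(Q)\subseteq\operatorname{supp}(P)$, then $C$ is also bounded on all types reachable under $Q$. (Otherwise $-\log p_n=\infty$ with positive probability, and the claimed value is infinite too; I would state this support assumption explicitly.) By bounded convergence it then suffices to show that the conditional law of $\hat P_{Y^n}$ converges weakly to the point mass at $Q$ or at $R^*$, and to evaluate $C$ there.

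For the case $Q\in\Gamma$, I would assume $Q$ lies in the interior of $\Gamma$, or more generally that $\Gamma$ is the closure of its interior. Fix $\epsilon>0$ and let $B_\epsilon$ be the $\epsilon$-ball around $Q$. By Sanov's theorem (or a Hoeffding bound on each coordinate), $q_n(\hat P_{Y^n}\notin B_\epsilon)\le (n+1)^{|\mathcal X|}e^{-n c_\epsilon}$ with $c_\epsilon=\inf_{R\notin B_\epsilon}D(R\|Q)>0$. Meanwhile $q_n(A_n(\Gamma))$ is subexponential, which is exactly part 1 of the main theorem, because $\Gamma$ contains $n$-types converging to $Q$. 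The ratio of these two quantities goes to zero, so the conditional law concentrates on $B_\epsilon$. The limit is then $C(Q)=H(Q)+D(Q\|P)$.

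For the case $Q\notin\Gamma$, I would take $\Gamma$ closed and convex with nonempty interior, and again assume it is the closure of its interior. Convexity makes $R^*$ unique, by strict convexity of $D(\cdot\|Q)$. The denominator satisfies $\frac1n\log q_n(A_n(\Gamma))\to -D(R^*\|Q)$: the upper bound is Sanov, and the lower bound uses $n$-types in the interior of $\Gamma$ approximating $R^*$. For the numerator, $\Gamma\setminus B_\epsilon(R^*)$ is compact. Uniqueness of the minimizer and lower semicontinuity give $\inf_{\Gamma\setminus B_\epsilon}D(\cdot\|Q)\ge D(R^*\|Q)+\eta_\epsilon$ for some $\eta_\epsilon>0$. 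The ratio is therefore at most $\mathrm{poly}(n)e^{-n\eta_\epsilon}\to0$, and the limit is $C(R^*)$. A Csisz\'ar-type Pythagorean inequality would give $\eta_\epsilon$ explicitly, but compactness suffices.

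The main obstacle is the lower bound on $q_n(A_n(\Gamma))$, that is, ensuring that $n$-types actually approach $Q$ or $R^*$ from inside $\Gamma$. For a general $\Gamma$, for example a boundary manifold with no $n$-types, the Sanov lower bound fails and the conditioning may be degenerate. I would handle this by imposing the regularity $\Gamma=\overline{\mathrm{int}\,\Gamma}$, together with convexity or uniqueness of $R^*$ in the second case. I would note that without uniqueness the limit still equals $C$ evaluated on the set of minimizers only if $C$ is constant there. The remaining steps of the proof are routine method-of-types estimates.
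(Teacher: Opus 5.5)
Your argument is correct in substance and follows the same skeleton as the paper's proof. Both write the surprisal as the type functional $C(\hat P_{Y^n})=H(\hat P_{Y^n})+D(\hat P_{Y^n}\|P)$, use $q_n(T(R))\doteq e^{-nD(R\|Q)}$, and identify $Q$ (resp.\ $R^*$) as the dominating type.

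The difference is in the decisive step. The paper asserts $\sum_{R\in\Gamma\cap\mathcal P_n} e^{-nD(R\|Q)}f(R)\doteq e^{-n\inf_{\Gamma} D(\cdot\|Q)}f(R^*)$ and reads off $f(R^*)$. At the $\doteq$ scale, however, the constant $f(R^*)$ is invisible, so this cannot by itself fix the limit of a ratio whose numerator and denominator share the same exponent. You instead prove that the conditional law of the type concentrates on $B_\epsilon(Q)$ or $B_\epsilon(R^*)$, with an explicit exponential gap $c_\epsilon$ or $\eta_\epsilon$. You then pass to the limit by bounded convergence, and that is what actually yields the value. Your route also makes explicit three hypotheses the statement silently needs:
\begin{itemize}
\item a Sanov lower bound on $q_n(A_n(\Gamma))$;
\item uniqueness of $R^*$;
\item a support condition making $C$ bounded.
\end{itemize}

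One regularity condition needs tightening. Knowing that $n$-types $R_n\in\Gamma$ converge to $Q$ (or to $R^*$) does not give $D(R_n\|Q)\to 0$ (resp.\ $\to D(R^*\|Q)$) unless the $R_n$ are supported in $\mathrm{supp}(Q)$. If $Q$ lacks full support, interior points of $\Gamma$ near the target generally have $D(\cdot\|Q)=\infty$.

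Here is a concrete failure. Take $\mathcal X=\{1,2,3\}$, $c=1/\sqrt2$, $Q=(c,1-c,0)$, and the closed convex set $\Gamma=\{R: R(3)\ge |R(1)-c|\}$. This $\Gamma$ equals the closure of its interior and contains $Q$. Every sample has $\hat P_{Y^n}(3)=0$, so $\hat P_{Y^n}\in\Gamma$ would force $\hat P_{Y^n}(1)=c$, which is impossible for an $n$-type. Hence $q_n(A_n(\Gamma))=0$ for all $n$. With $Q=(0.9,0.1,0)\notin\Gamma$, the same set gives $R^*=(c,1-c,0)$ with finite $D(R^*\|Q)$, yet again $q_n(A_n(\Gamma))=0$, so your case-2 lower bound fails.

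Your primary assumption $Q\in\mathrm{int}\,\Gamma$ is fine. For the ``more general'' variant and for case~2, add that $Q$ has full support; with your support condition, $P$ then has full support too. Then $D(\cdot\|Q)$ and $C$ are continuous on the whole simplex and every step goes through.

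A cosmetic point: in case~2 the ratio is bounded by $e^{-n(\eta_\epsilon-o(1))}$ rather than $\mathrm{poly}(n)e^{-n\eta_\epsilon}$. The denominator is only known to be $e^{-n(D(R^*\|Q)+o(1))}$, but the conclusion is unchanged.
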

\begin{proof}
For any sequence $x^n$, its surprisal is $f(x^n) = H(P_{x^n}) + D(P_{x^n} \| P)$. The conditional expectation is:
\begin{equation}
\mathbb{E}_{q_n} [ f(Y^n) \mid Y^n \in A_n ] = \frac{\sum_{R \in \Gamma \cap \mathcal{P}_n} q_n(T(R)) f(R)}{q_n(A_n(\Gamma))}.
\end{equation}
Since $q_n(T(R)) \doteq e^{-n D(R \| Q)}$, we have
$q_n(A_n(\Gamma)) \doteq \exp\left( -n \inf_{R \in \Gamma} D(R \| Q) \right)$ and
\[
\sum_{R \in \Gamma \cap \mathcal{P}_n} e^{-n D(R \| Q)} f(R) \doteq \exp\left( -n \inf_{R \in \Gamma} D(R \| Q) \right) f(R^*)
\]
where $R^* = \arg\min_{R \in \Gamma} D(R \| Q)$. We distinguish between two cases:

\paragraph{$Q \in \Gamma$} 
In this case $\inf_{R \in \Gamma} D(R \| Q) = 0$, achieved at $R^* = Q$. In this case, the rare event becomes typical under the proposal distribution. Consequently:
\begin{equation}
\lim_{n\to\infty} \mathbb{E}_{q_n} [ f(Y^n) \mid Y^n \in A_n ] = f(Q) = H(Q) + D(Q \| P).
\end{equation}
This confirms that when a proposal makes discovery fast ($J_{\mathrm{hit}} = 0$), the quality of the discovery is governed solely by the cross-entropy of the proposal with respect to the nominal model.

\paragraph{$Q \notin \Gamma$}
If $Q$ is outside $\Gamma$, the probability $Q^n(\Gamma)$ is dominated by the type $R^* = \arg \min_{R \in \Gamma} D(R \| Q)$. The hitting time is exponentially large. The limit becomes:
\begin{equation}
\lim_{n\to\infty} \mathbb{E}_{q_n} [ f(Y^n) \mid Y^n \in A_n ] = H(R^*) + D(R^* \| P).
\end{equation}
In this regime, the quality of discovery is determined not by the proposal $Q$ itself, but by the "closest" point in the target set $\Gamma$ that $Q$ is able to reach.
\end{proof}

\end{document}